\def\BibTeX{{\rm B\kern-.05em{\sc i\kern-.025em b}\kern-.08em
    T\kern-.1667em\lower.7ex\hbox{E}\kern-.125emX}}
\newtheoremstyle{sltheorem}
{}                
{}                
{}        
{10pt}                
{\bfseries}       
{:}               
{ }               
{}                
\theoremstyle{sltheorem}
\newtheorem{definition}{Definition}
\newtheorem{assumption}{Assumption}
\newtheorem{proposition}{Proposition}
\begin{document}
\title{A Gradient-Based Capacity Accreditation Framework in Resource Adequacy: Formulation, Computation, and Practical Implications}
\author{Qian Zhang, Feng Zhao, Gord Stephen, Chanan Singh, Le Xie
\thanks{Qian Zhang and Le Xie are with the School of Engineering and Applied Sciences, Harvard University, Allston, USA. Feng Zhao is with the ISO New England Inc., Holyoke, MA, USA. Gord Stephen is with the Department of Electrical and Electronic Engineering, Imperial College London, SW7 2BU London, UK. Chanan Singh is with the Department of Electrical and Computer Engineering, Texas A\&M University, College Station, USA. (correspondence e-mail: qianzhang@g.harvard.edu).}}

\markboth{Journal of \LaTeX\ Class Files,~Vol.~18, No.~9, September~2020}%
{How to Use the IEEEtran \LaTeX \ Templates}

\maketitle

\markboth{Journal of \LaTeX\ Class Files,~Vol.~18, No.~9, September~2020}%
{How to Use the IEEEtran \LaTeX \ Templates}

\begin{abstract}
Probabilistic resource adequacy assessment is a cornerstone of modern capacity accreditation. This paper develops a gradient-based framework, in which capacity accreditation is interpreted as the directional derivative of a probabilistic resource adequacy metric with respect to resource capacity, that unifies two widely used accreditation approaches: Effective Load Carrying Capability (ELCC) and Marginal Reliability Impact (MRI). Under mild regularity conditions, we show that marginal ELCC and MRI yield equivalent accreditation factors, while their numerical implementations exhibit markedly different computational characteristics. Building on this framework, we demonstrate how infinitesimal perturbation analysis enables up to a $1000\times$ speedup in gradient estimation for capacity accreditation, and we implement gradient-informed search algorithms that significantly accelerate ELCC computations relative to standard bisection methods. Large-scale Monte Carlo experiments show that MRI achieves substantial runtime reductions compared to ELCC and exhibits greater robustness to perturbation step-size selection. These results provide practical guidance for implementing efficient and scalable capacity accreditation in large-scale power systems.
\end{abstract}

\begin{IEEEkeywords}
Resource adequacy, effective load carrying capability, marginal reliability impact, capacity accreditation
\end{IEEEkeywords}

\section{Introduction}
To ensure power system reliability and cost-efficiency under a rapidly changing resource mix, many regions have been reforming \emph{capacity accreditation} frameworks to improve fairness and accuracy, both in market-based settings \cite{pjm_market_design_project_roadmap, iso_ne_capacity_auction_reforms, zuo2025revisiting} and in centralized planning frameworks \cite{nelson2021rawg,zhang2023power}. A key motivation is that legacy heuristic approaches, such as outage–rate–discounted unforced capacity (UCAP) for thermal units and historical average output for intermittent resources, can mischaracterize reliability contributions as systems become increasingly weather-driven and exhibit stronger cross-resource correlations. In principle, accredited capacities from different resource types are treated as indistinguishable for meeting system capacity requirements, implying that one megawatt of capacity credit should provide the same contribution to system resource adequacy (RA) independent of the underlying technology.

To address these challenges, two \emph{gradient-based} probabilistic capacity accreditation methods are now widely used (or under active consideration): \emph{Effective Load Carrying Capability} (ELCC) and \emph{Marginal Reliability Impact} (MRI). Both quantify a resource’s contribution to RA by evaluating changes in a chosen reliability metric under a probabilistic loss-of-load simulation model. Moreover, when the reliability metric is selected as Expected Unserved Energy (EUE), marginal ELCC yields the same accreditation value as MRI under standard regularity conditions \cite{fengmri}. Despite this conceptual relationship, the literature has not yet established a rigorous mathematical formulation, nor provided a clear comparison of the computational complexity between ELCC and MRI in large-scale system applications.

A resource’s ELCC is defined as the greatest constant firm load that can be added to the system \emph{after} adding the resource while maintaining the same reliability level \cite{garver2007effective}. Capacity accreditation based on ELCC is obtained by normalizing the load-carrying capability by the nameplate quantity of the resource addition, whether evaluated for an entire resource class (average ELCC) or for an incremental unit (marginal ELCC)\footnote{Unless otherwise stated, ``ELCC'' in this paper refers to \emph{marginal} ELCC.}.
In contrast, MRI is defined as the marginal impact of a small capacity variation on system EUE \cite{zhao2017constructing}. Unlike ELCC (measured in MW), MRI has units of hours per study period, and the MRI-based accreditation is defined as the normalized MRI by the MRI of a \emph{perfect} reference resource \cite{fengmri}.

ELCC and MRI are evaluated through probabilistic assessment of stochastic power systems with heterogeneous resources and discrete outage and availability events, implemented in practice using probabilistic resource adequacy assessment (RAA) simulators such as GE-MARS \cite{GEMARS} and PRAS \cite{stephen2024probabilistic}. A substantial body of RA research has focused on the definition and evaluation of reliability metrics \cite{Gord9810615, nerc2024evolving}, Monte Carlo sample size selection \cite{stephen2024sample}, and the integration of emerging resources such as variable generation and storage \cite{singh1985reliability,xu2013power}. These efforts primarily target accurate estimation of system-level reliability outcomes under uncertainty.

Capacity accreditation adds an additional layer of difficulty: rather than only evaluating reliability, it requires quantifying how reliability changes under marginal resource additions. This is a \emph{gradient estimation} problem in a stochastic simulation environment, and the broader simulation literature has developed multiple estimator families, such as finite-difference methods, infinitesimal perturbation analysis, and likelihood-ratio approaches, each with distinct bias--variance and implementability tradeoffs \cite{fu2012conditional,mohamed2020monte}. However, rigorous formulations that connect these gradient estimation concepts to capacity accreditation, together with large-scale computational validation under realistic RA simulation settings, remain limited in the power systems literature \cite{salehfar2002application}.

The contributions of this paper are fourfold. First, we formulate ELCC- and MRI-based capacity accreditation within a unified gradient-based framework by casting accreditation as directional derivatives on a well-defined probability space, and we state explicit assumptions under which marginal ELCC and MRI yield equivalent accreditation factors (Sections~\ref{sec:preliminaries} and \ref{sec:CA}). Second, we introduce an infinitesimal perturbation analysis (IPA) perspective to characterize and interpret the meaning of gradients in capacity accreditation, to clarify when unbiased pathwise gradient estimators are available. IPA can be used to reduce computational cost by enabling gradient-based capacity accreditation to be obtained from a single simulation run (Section~\ref{sec:IPA}). Third, focusing on perturbation-based implementations that dominate current practice, we identify key computational challenges in simulation-based accreditation and provide a comparative complexity analysis showing why ELCC requires iterative root-finding while MRI can be computed in a single pass (Section~\ref{sec:CC}). Finally, using large-scale numerical studies, we quantify computation time and numerical sensitivity for ELCC and MRI across perturbation step sizes and baseline reliability levels, thereby illustrating practical implications for implementing scalable accreditation in modern systems (Section~\ref{sec:nc}).

\section{Preliminaries and Definitions} \label{sec:preliminaries}
\subsection{System Descriptions}
Let $\mathcal{T}$ denote the set of all hours in a year (or the study periods of interest), $\mathcal{D}$ the set of all days, and $\Omega$ the set of all possible year-long scenarios for the system under study, assumed to be finite. Following the notation in \cite{zuo2025revisiting}, we construct a probability space $(\Omega, \mathcal{F}, \mathbb{P})$, where $\mathcal{F}$ represents the $\sigma$-algebra of events on $\Omega$ and $\mathbb{P}$ is a probability measure defined on $\mathcal{F}$ satisfying standard measure-theoretic properties. Let $X: \Omega \rightarrow \mathbb{R}$ be a random variable representing our RA metric for any given scenario. For each generator $g \in \mathcal{G}$, let $x_g \in \mathbb{R}_{+}$ denote its installed capacity. In the absence of storage, and assuming that generation resources have no intertemporal operating constraints, we define
\[
A_{g\tau\omega} : \mathcal{T} \times \Omega \rightarrow [0,1]
\]
as a stochastic process representing generator $g$'s availability, where $A_{g\tau\omega}$ denotes its realization in hour $\tau$ of scenario $\omega$. Then, the total available capacity is stochastic processes on $\mathcal{T} \times \Omega \rightarrow \mathbb{R}_{+}$, defined as
\[
\hat{x}_{\tau\omega} = \sum_{g \in \mathcal{G}} x_g A_{g\tau\omega}.
\]

The system load profile is assumed to be sampled from a set of representative scenarios, with $L_{\tau\omega}$ denoting load in hour $\tau$ of scenario $\omega$. For notational compactness, we define the vectors of hourly values within scenario $\omega$ as 
\[
\boldsymbol{\hat{x}}_\omega := (\hat{x}_{\tau\omega})_{\tau \in \mathcal{T}},
\quad 
\boldsymbol{L}_\omega := (L_{\tau\omega})_{\tau \in \mathcal{T}},
\]
representing total available capacity and system load, respectively.

\subsection{Resource Adequacy Metrics}

For semantic clarity, we distinguish three related but distinct concepts: the \emph{RA metric}, the \emph{risk measure}, and the \emph{reliability standard}. The RA metric specifies how shortage events are quantified and aggregated within a single assessment period, such as Unserved Energy (UE), Loss of Load Hours (LOLH), and Loss of Load Days (LOLD). The risk measure defines how these per-scenario outcomes are aggregated across multiple stochastic realizations (e.g., expectation or percentile). Finally, the reliability standard specifies the target or benchmark value of the risk measure that the system is expected to satisfy, such as the `one day in ten years' LOLD criterion commonly used in North America through the
1960s \cite{Gord9810615}.

We first define the RA metric for each scenario $\omega$ as a functional
\[
M: \mathbb{R}_+^{|\mathcal{T}|} \times \mathbb{R}_+^{|\mathcal{T}|} \to \mathbb{R}_+,
\quad
M(\boldsymbol{\hat{x}}_\omega, \boldsymbol{L}_\omega),
\]
which aggregates unserved load over time according to the chosen reliability measure. Examples include:
\begin{align*}
M_{\text{LOLH}}(\boldsymbol{\hat{x}}_\omega, \boldsymbol{L}_\omega) 
&= \sum_{\tau \in \mathcal{T}} \mathbbm{1}\Big\{ (L_{\tau\omega} - \hat{x}_{\tau\omega}) > 0 \Big\}, \\
M_{\text{LOLD}}(\boldsymbol{\hat{x}}_\omega, \boldsymbol{L}_\omega) 
&= \sum_{d \in \mathcal{D}} \mathbbm{1}\Big\{ \max_{\tau \in \mathcal{T}_d} (L_{\tau\omega} - \hat{x}_{\tau\omega})> 0 \Big\}, \\
M_{\text{UE}}(\boldsymbol{\hat{x}}_\omega, \boldsymbol{L}_\omega) 
&= \sum_{\tau \in \mathcal{T}} (L_{\tau\omega} - \hat{x}_{\tau\omega})_+,
\end{align*}
where $(\cdot)_+ := \max\{\cdot,0\}$.

The system-level RA measure is defined by aggregating the per-scenario metric across the probability space:
\[
\mathcal{M}(\boldsymbol{\hat{x}}, \boldsymbol{L})
:= \Phi_\omega \big[\, M(\boldsymbol{\hat{x}}_\omega, \boldsymbol{L}_\omega) \,\big],
\]
where $\Phi_\omega[\cdot]$ is a risk operator. Common choices include the expectation
$\Phi_\omega[Y]=\mathbb{E}_\omega[Y]$, or a tail-risk measure such as the conditional value-at-risk (CVaR).

\section{Gradient-Based Capacity Accreditation Methods} \label{sec:CA}

\subsection{Assumptions}

Capacity accreditation is performed relative to a baseline system configuration. In practice, the system operator first estimates the baseline system’s Installed Capacity Requirement (ICR) prior to accreditation using historical information, planned retirements, and policy-driven assumptions (e.g., renewable portfolio targets). Throughout this paper, we let $\boldsymbol{\hat{x}}$ denote the operator’s baseline estimate of available capacity trajectories used in the accreditation study. This baseline plays a central role in marginal-based accreditation methods, as both ELCC and MRI quantify reliability impacts relative to this reference system. We adopt the notation of Section~\ref{sec:preliminaries} and impose the following assumptions.

\begin{assumption}[Baseline Shortfall and Shortfall Dependence]\label{ass:shortfall}
Assume the baseline system is not perfectly adequate, i.e., $\mathcal{M}(\boldsymbol{\hat{x}},\boldsymbol{L})>0$, and that there exists a functional $\widetilde{\mathcal{M}}$ such that, for any pair of trajectories $(\boldsymbol{\hat{x}},\boldsymbol{L})$,
\begin{equation}
\mathcal{M}(\boldsymbol{\hat{x}},\boldsymbol{L})
=\widetilde{\mathcal{M}}\big(\boldsymbol{L}-\boldsymbol{\hat{x}}\big).
\end{equation}
Equivalently, for any scalar $c\in\mathbb{R}$,
\begin{equation}
\mathcal{M}(\boldsymbol{\hat{x}}+c\mathbf{1},\,\boldsymbol{L}+c\mathbf{1})
=\mathcal{M}(\boldsymbol{\hat{x}},\,\boldsymbol{L}).
\end{equation}
\end{assumption}

This assumption holds for standard RA metrics combined with common risk operators (e.g., expectation or quantiles), since these measures depend only on shortfalls.

\begin{assumption}[Monotonicity and Continuity]\label{ass:mono}
$\mathcal{M}$ is nonincreasing in $\boldsymbol{\hat{x}}$ and nondecreasing in $\boldsymbol{L}$ (componentwise), and is continuous in both arguments.
\end{assumption}

Assumption~\ref{ass:mono} also holds for non-storage resources under standard RA metrics with common risk operators. For storage resources, however, monotonicity must be verified with respect to the assumed dispatch policy, as the effective contribution of storage depends on intertemporal operating decisions \cite{qianstorage}.

\begin{assumption}[Directional Differentiability at the Baseline]\label{ass:diff}
At the baseline $(\boldsymbol{\hat{x}},\boldsymbol{L})$, the risk measure $\mathcal{M}$ admits (Gâteaux) directional derivatives in the directions relevant for capacity accreditation. Specifically, for any direction $\boldsymbol{v}\in\mathbb{R}_+^{|\mathcal{T}|}$ of interest,
\begin{equation}\label{eq:dir-deriv}
\partial_{\hat{x}}\mathcal{M}[\boldsymbol{v}]
:=\left.\frac{d}{d\epsilon}\right|_{\epsilon=0}\mathcal{M}(\boldsymbol{\hat{x}}+\epsilon\boldsymbol{v},\boldsymbol{L})
\end{equation}
exists.
Here, $\boldsymbol{v}$ represents a capacity-availability perturbation associated with a marginal change in a single resource, a resource class, or a colocated portfolio (e.g., $v_\tau=A_{g\tau}$ for a candidate unit $g$).
We further assume that the baseline evaluation point is \emph{regular} in the sense that it does not coincide with a kink of the underlying piecewise-linear mapping; in particular, this is typically satisfied when the baseline system exhibits nonzero shortfall under the chosen metric.
\end{assumption}

In this paper, we primarily focus on EUE, for which $\mathcal{M}$ is piecewise linear in $\boldsymbol{\hat{x}}$ and may fail to be differentiable \emph{only} at turning points where marginal capacity does change the set of shortage hours \cite{zachary2022integration}. Since accreditation evaluates derivatives at a fixed baseline and along specific directions, assuming regularity at the baseline is mild in practice. For other metrics such as LOLD, which involve indicator functions and maxima, analogous results continue to hold when $\partial_{\hat{x}}\mathcal{M}$ is interpreted as an appropriate subgradient or when limits are taken along regular (non-degenerate) points.

\subsection{Effective Load Carrying Capability (ELCC)}

The ELCC of a candidate unit quantifies the largest constant load $L_c$ that can be added while maintaining the same system-level reliability after adding the unit.

\begin{definition}[Marginal ELCC]\label{def:elcc}
Let the candidate unit have additional nameplate capacity $\Delta x>0$ and scenario-wise availability $\boldsymbol{A}_{g,\omega}=(A_{g\tau\omega})_{\tau\in\mathcal{T}}$. The \emph{largest} $L_c$ meeting
\begin{equation}\label{eq:elcc-implicit}
\mathcal{M}\big(\boldsymbol{\hat{x}}+\Delta x\,\boldsymbol{A}_{g},\,\boldsymbol{L}+L_c\,\mathbf{1}\big)
=
\mathcal{M}\big(\boldsymbol{\hat{x}},\,\boldsymbol{L}\big)
\end{equation}
is defined as the value ELCC. Then, the associated \emph{ELCC Capacity Accreditation} is the fraction of nameplate credited:
\begin{equation}
\alpha_g^{\mathrm{ELCC}}:=\frac{L_c}{\Delta x}\in[0,1].
\end{equation}
\end{definition}

\begin{proposition}[Equivalent ELCC formulation]\label{prop:elcc-equivalent}
Under Assumption~\ref{ass:shortfall}, \eqref{eq:elcc-implicit} is equivalent to
\begin{equation}\label{eq:elcc-shifted}
\mathcal{M}\big(\boldsymbol{\hat{x}}+\Delta x\,\boldsymbol{A}_{g},\,\boldsymbol{L}\big)
=
\mathcal{M}\big(\boldsymbol{\hat{x}}+L_c\mathbf{1},\,\boldsymbol{L}\big),
\end{equation}
where $\boldsymbol{\hat{x}}+L_c\mathbf{1}$ corresponds to adding $L_c$ MW of a resource uniformly available across all periods and without outage. 
\end{proposition}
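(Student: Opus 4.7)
The plan is to derive \eqref{eq:elcc-shifted} from \eqref{eq:elcc-implicit} (and vice versa) by systematically applying the joint shift-invariance built into Assumption~\ref{ass:shortfall}. Under that assumption, $\mathcal{M}(\hat{\boldsymbol{x}},\boldsymbol{L})=\widetilde{\mathcal{M}}(\boldsymbol{L}-\hat{\boldsymbol{x}})$, so adding the same scalar multiple $c\mathbf{1}$ to both arguments leaves $\mathcal{M}$ unchanged; this is precisely what lets us trade ``adding a uniform load'' for ``adding a uniform perfect resource.''

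Concretely, I would first rewrite the right-hand side of \eqref{eq:elcc-implicit} using the shift identity with $c=L_c$, namely $\mathcal{M}(\hat{\boldsymbol{x}},\boldsymbol{L})=\mathcal{M}(\hat{\boldsymbol{x}}+L_c\mathbf{1},\,\boldsymbol{L}+L_c\mathbf{1})$. Substituting into \eqref{eq:elcc-implicit} yields
\begin{equation*}
\mathcal{M}\bigl(\hat{\boldsymbol{x}}+\Delta x\,\boldsymbol{A}_g,\,\boldsymbol{L}+L_c\mathbf{1}\bigr)
=\mathcal{M}\bigl(\hat{\boldsymbol{x}}+L_c\mathbf{1},\,\boldsymbol{L}+L_c\mathbf{1}\bigr),
\end{equation*}
an equality at the common load $\boldsymbol{L}+L_c\mathbf{1}$ between the two capacity profiles $\hat{\boldsymbol{x}}+\Delta x\,\boldsymbol{A}_g$ and $\hat{\boldsymbol{x}}+L_c\mathbf{1}$. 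Expressing each side through $\widetilde{\mathcal{M}}$ and then shifting both shortfall arguments by $-L_c\mathbf{1}$ (which cancels on both sides, since the metric depends only on the load--capacity difference) leaves the equivalent identity $\mathcal{M}(\hat{\boldsymbol{x}}+\Delta x\,\boldsymbol{A}_g,\boldsymbol{L})=\mathcal{M}(\hat{\boldsymbol{x}}+L_c\mathbf{1},\boldsymbol{L})$, which is exactly \eqref{eq:elcc-shifted}. The reverse implication runs the same manipulation in the opposite order.

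I expect the main obstacle to be the careful bookkeeping of which shift acts on which argument: the identity in Assumption~\ref{ass:shortfall} applies only to \emph{joint} translations of $\hat{\boldsymbol{x}}$ and $\boldsymbol{L}$, and the reduction from the common-load equality at $\boldsymbol{L}+L_c\mathbf{1}$ back to the corresponding equality at $\boldsymbol{L}$ relies on observing that the difference $(\hat{\boldsymbol{x}}+\Delta x\,\boldsymbol{A}_g)-(\hat{\boldsymbol{x}}+L_c\mathbf{1})=\Delta x\,\boldsymbol{A}_g-L_c\mathbf{1}$ between the two capacity arguments is invariant under any common translation, so the two $\widetilde{\mathcal{M}}$ values change identically under the shift. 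This invariance is the heart of the argument; for the piecewise-linear metrics of interest (e.g., UE), it is automatic at a regular baseline in the sense of Assumption~\ref{ass:diff}, because the relevant shortfall coordinates retain their sign pattern under the shift. The resulting equivalence formalizes the underlying duality between uniform load additions and uniform perfect-capacity additions.
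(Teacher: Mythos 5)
Your reduction is correct up to the common-load identity $\mathcal{M}(\boldsymbol{\hat{x}}+\Delta x\,\boldsymbol{A}_g,\,\boldsymbol{L}+L_c\mathbf{1}) = \mathcal{M}(\boldsymbol{\hat{x}}+L_c\mathbf{1},\,\boldsymbol{L}+L_c\mathbf{1})$, but the final step --- ``shifting both shortfall arguments by $-L_c\mathbf{1}$, which cancels on both sides'' --- is exactly where the argument breaks. Writing both sides through $\widetilde{\mathcal{M}}$ with $\boldsymbol{z}:=\boldsymbol{L}-\boldsymbol{\hat{x}}$, the common-load identity reads $\widetilde{\mathcal{M}}(\boldsymbol{z}-\Delta x\,\boldsymbol{A}_g+L_c\mathbf{1})=\widetilde{\mathcal{M}}(\boldsymbol{z})$, whereas \eqref{eq:elcc-shifted} reads $\widetilde{\mathcal{M}}(\boldsymbol{z}-\Delta x\,\boldsymbol{A}_g)=\widetilde{\mathcal{M}}(\boldsymbol{z}-L_c\mathbf{1})$. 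Passing from the first to the second requires $\widetilde{\mathcal{M}}(\boldsymbol{u})=\widetilde{\mathcal{M}}(\boldsymbol{w})\Rightarrow\widetilde{\mathcal{M}}(\boldsymbol{u}-L_c\mathbf{1})=\widetilde{\mathcal{M}}(\boldsymbol{w}-L_c\mathbf{1})$, i.e.\ that $\widetilde{\mathcal{M}}$ responds identically to a uniform translation at two different points. Assumption~\ref{ass:shortfall} says nothing of the sort: it only asserts invariance of $\mathcal{M}$ under \emph{joint} translations of $(\boldsymbol{\hat{x}},\boldsymbol{L})$, which is already fully consumed in defining $\widetilde{\mathcal{M}}$; it places no constraint on how $\widetilde{\mathcal{M}}$ varies along the direction $\mathbf{1}$ of its single argument. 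For EUE the needed implication is genuinely false, because translating by $-L_c\mathbf{1}$ changes the set of positive-shortfall hours differently at $\boldsymbol{u}$ and at $\boldsymbol{w}$ --- which is precisely what adding firm capacity does.

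A concrete two-period deterministic counterexample: baseline shortfalls $\boldsymbol{z}=(4,3)$ and a resource with $\Delta x\,\boldsymbol{A}_g=(0,10)$. Then \eqref{eq:elcc-implicit} asks for $(4+L_c)_+ + (L_c-7)_+ = 7$, giving $L_c=3$, while \eqref{eq:elcc-shifted} asks for $(4-L_c)_+ + (3-L_c)_+ = 4$, giving $L_c=1.5$; the two equations have different solution sets, so no proof of exact equivalence from Assumption~\ref{ass:shortfall} alone can succeed. The statement is really a \emph{local} one (the paper's own remark calls marginal EFC and marginal ELCC equivalent only ``locally''): it holds to first order as $\Delta x\to 0$, where $L_c=O(\Delta x)$ and the shift $L_c\mathbf{1}$ is too small to alter the sign pattern of the shortfall coordinates at a regular baseline in the sense of Assumption~\ref{ass:diff}. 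Your closing sentence gestures at exactly this sign-pattern invariance, but for finite $\Delta x$ and $L_c$ it fails, so the argument needs to be recast as a first-order/infinitesimal expansion (in the spirit of Proposition~\ref{prop:elcc-linear}) rather than an exact algebraic cancellation.
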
 

\emph{Remark}: The value $L_c$ defined by \eqref{eq:elcc-shifted} is often referred to as the marginal \emph{Equivalent Firm Capacity} (EFC). Under Assumption~\ref{ass:shortfall}, marginal EFC is numerically equivalent to marginal ELCC \emph{locally}, although the two concepts differ in physical interpretation: ELCC is framed as an equivalent load increase, whereas EFC is framed as an equivalent addition of perfectly firm capacity \cite{zachary2012probability}.

\begin{proposition}[Existence and Uniqueness of $L_c$ for EUE]\label{prop:exist-unique}
Fix $\Delta x>0$ and let $\mathcal{M}$ be the EUE-based risk measure. Under Assumptions~\ref{ass:shortfall}, \ref{ass:mono}, and \ref{ass:diff}, there exists a unique $L_c\ge 0$ satisfying \eqref{eq:elcc-implicit} or \eqref{eq:elcc-shifted}.
\end{proposition}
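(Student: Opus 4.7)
The plan is to reformulate \eqref{eq:elcc-implicit} as a one-dimensional root-finding problem on a continuous, monotone scalar function, then obtain existence by the intermediate value theorem and uniqueness by a strict-monotonicity argument specific to EUE. Concretely, I would fix $\Delta x>0$ and study
\[
g(L_c) \;:=\; \mathcal{M}\big(\boldsymbol{\hat{x}}+\Delta x\,\boldsymbol{A}_{g},\,\boldsymbol{L}+L_c\mathbf{1}\big), \qquad L_c\ge 0,
\]
together with the target value $y^\star:=\mathcal{M}(\boldsymbol{\hat{x}},\boldsymbol{L})$, which satisfies $y^\star>0$ by Assumption~\ref{ass:shortfall}. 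The equation \eqref{eq:elcc-implicit} becomes $g(L_c)=y^\star$, and Proposition~\ref{prop:elcc-equivalent} guarantees that the corresponding solution also solves \eqref{eq:elcc-shifted}.

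For existence, I would first note that Assumption~\ref{ass:mono} makes $g$ continuous and nondecreasing in $L_c$. At the left endpoint, monotonicity in the first argument gives $g(0)=\mathcal{M}(\boldsymbol{\hat{x}}+\Delta x\boldsymbol{A}_g,\boldsymbol{L})\le \mathcal{M}(\boldsymbol{\hat{x}},\boldsymbol{L})=y^\star$. For the right behavior, because EUE in each scenario $\omega$ equals $\sum_{\tau}(L_{\tau\omega}+L_c-\hat{x}_{\tau\omega}-\Delta x A_{g\tau\omega})_+$, taking $L_c\to\infty$ forces every hour into shortage and drives $g(L_c)\to\infty$. Continuity of $g$ combined with $g(0)\le y^\star<\lim_{L_c\to\infty} g(L_c)$ then yields, by the IVT, some $L_c\ge 0$ with $g(L_c)=y^\star$.

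For uniqueness I would exploit the piecewise-linear structure of EUE. In each scenario $g$ is convex and piecewise linear in $L_c$ with slope equal to the number of hours currently in shortage; aggregating through the risk operator, the (one-sided) slope of $g$ at any $L_c$ equals the probability-weighted count of shortage hours at that $L_c$. Whenever $g(L_c)>0$ there must exist a positive-probability scenario with at least one strictly positive summand $(L_{\tau\omega}+L_c-\hat{x}_{\tau\omega}-\Delta x A_{g\tau\omega})_+$, so the slope is strictly positive there. Thus $g$ is strictly increasing on the set $\{g>0\}$; since $y^\star>0$, the level set $g^{-1}(y^\star)$ cannot contain an interval, and the solution is unique. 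Assumption~\ref{ass:diff}'s regularity at the baseline is invoked only to rule out the degenerate situation in which the solution $L_c$ lands exactly on a kink that would collapse the relevant one-sided slopes.

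The main obstacle I anticipate is the uniqueness step, specifically the passage from ``some scenario exhibits shortage'' to ``the aggregated $g$ is strictly increasing at that $L_c$.'' Under expectation this is immediate from linearity, but for more general risk operators $\Phi_\omega$ (as allowed in Section~\ref{sec:preliminaries}) one must invoke monotonicity of $\Phi_\omega$ and the law-invariance of common choices (expectation, quantiles, CVaR) to preserve strict increase from the scenario level to the aggregate. I would handle this by stating the argument for the expectation case cleanly and then remarking that the same conclusion extends to any strictly-monotone-in-shortfall risk operator, which covers the cases of practical interest.
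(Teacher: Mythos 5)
Your proof is correct and follows essentially the same route as the paper's: reduce the problem to a one-dimensional, continuous, monotone scalar function, obtain existence from the intermediate value theorem, and obtain uniqueness from strict monotonicity of the EUE wherever shortfall persists with positive probability. The only difference is a mirror-image choice of parametrization—you add load in \eqref{eq:elcc-implicit} and let $g(L_c)\to\infty$, while the paper adds firm capacity in \eqref{eq:elcc-shifted} and sends $\phi(c)\to-\mathcal{M}(\boldsymbol{\hat{x}}+\Delta x\,\boldsymbol{A}_{g},\boldsymbol{L})<0$, so your tail argument has the minor advantage of not needing to assert that the perturbed system's EUE stays strictly positive.
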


\begin{proof}
Consider the function
\[
\phi(c)=\ \mathcal{M}(\boldsymbol{\hat{x}}+c\mathbf{1},\,\boldsymbol{L})
-\mathcal{M}(\boldsymbol{\hat{x}}+\Delta x\,\boldsymbol{A}_{g},\,\boldsymbol{L}),
\quad c\ge 0.
\]
By Assumption~\ref{ass:mono}, $\phi$ is continuous and nonincreasing in $c$ (adding firm capacity cannot increase EUE). Moreover, $\phi(0)\ge 0$ since $\Delta x\,\boldsymbol{A}_g\ge \boldsymbol{0}$ implies
$\mathcal{M}(\boldsymbol{\hat{x}},\boldsymbol{L})\ge \mathcal{M}(\boldsymbol{\hat{x}}+\Delta x\,\boldsymbol{A}_{g},\boldsymbol{L})$.
As $c\to+\infty$, EUE vanishes, so $\mathcal{M}(\boldsymbol{\hat{x}}+c\mathbf{1},\boldsymbol{L})\to 0$ and hence
$\phi(c)\to -\,\mathcal{M}(\boldsymbol{\hat{x}}+\Delta x\,\boldsymbol{A}_{g},\boldsymbol{L})<0$,
where strict negativity uses Assumption~\ref{ass:shortfall} and $\Delta x<\infty$.
Therefore, by the intermediate value theorem, there exists $L_c\ge 0$ such that $\phi(L_c)=0$,
which is exactly \eqref{eq:elcc-shifted}.

For uniqueness, Assumption~\ref{ass:diff} implies that $\phi$ has directional derivative
\[
\phi'(c)=\partial_{\hat{x}}\mathcal{M}[\mathbf{1}]\Big|_{(\boldsymbol{\hat{x}}+c\mathbf{1},\,\boldsymbol{L})}.
\]
For the EUE metric, $\partial_{\hat{x}}\mathcal{M}[\mathbf{1}]<0$ whenever there is positive probability of shortfall,
so $\phi$ is strictly decreasing on the relevant range. Hence $\phi(c)=0$ has at most one solution, and $L_c$ is unique.
\end{proof}

\begin{proposition}[First-order ELCC response]\label{prop:elcc-linear}
Assume $\mathcal{M}$ is the EUE-based risk measure. Under Assumptions~\ref{ass:shortfall} and \ref{ass:diff}, as $\Delta x\to 0$ the ELCC $L_c$ satisfying \eqref{eq:elcc-shifted} obeys
\begin{equation}
L_c
=\frac{\partial_{\hat{x}}\mathcal{M}[\boldsymbol{A}_{g}]}{\partial_{\hat{x}}\mathcal{M}[\mathbf{1}]}\,\Delta x
\;+\; o(\Delta x),
\end{equation}
and therefore
\begin{equation}
\alpha_g^{\mathrm{ELCC}}
=\frac{L_c}{\Delta x}
=\frac{\partial_{\hat{x}}\mathcal{M}[\boldsymbol{A}_{g}]}{\partial_{\hat{x}}\mathcal{M}[\mathbf{1}]}
\;+\; o(1).
\end{equation}
\end{proposition}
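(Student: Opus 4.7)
The plan is to derive the first-order expansion of $L_c$ by Taylor-expanding both sides of the shifted ELCC equation \eqref{eq:elcc-shifted} about the baseline $(\boldsymbol{\hat{x}},\boldsymbol{L})$ and matching the leading-order terms in $\Delta x$. The key structural fact I will lean on is that Proposition~\ref{prop:exist-unique} already guarantees a unique $L_c$ exists and that $\partial_{\hat{x}}\mathcal{M}[\mathbf{1}]<0$ at the baseline under Assumption~\ref{ass:shortfall}, so the ratio in the statement is well-defined.

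First I would show $L_c\to 0$ as $\Delta x\to 0$. By continuity in Assumption~\ref{ass:mono}, the left-hand side of \eqref{eq:elcc-shifted} converges to $\mathcal{M}(\boldsymbol{\hat{x}},\boldsymbol{L})$. Since $c\mapsto \mathcal{M}(\boldsymbol{\hat{x}}+c\mathbf{1},\boldsymbol{L})$ is continuous and strictly decreasing on a neighborhood of $c=0$ (by the strict negativity of $\partial_{\hat{x}}\mathcal{M}[\mathbf{1}]$ at a regular baseline), the unique solution $L_c$ must collapse to $0$ in the limit. Next, I would substitute the directional-derivative expansions from Assumption~\ref{ass:diff},
\begin{align*}
\mathcal{M}(\boldsymbol{\hat{x}}+\Delta x\,\boldsymbol{A}_g,\boldsymbol{L})
&=\mathcal{M}(\boldsymbol{\hat{x}},\boldsymbol{L})+\partial_{\hat{x}}\mathcal{M}[\boldsymbol{A}_g]\,\Delta x+o(\Delta x),\\
\mathcal{M}(\boldsymbol{\hat{x}}+L_c\mathbf{1},\boldsymbol{L})
&=\mathcal{M}(\boldsymbol{\hat{x}},\boldsymbol{L})+\partial_{\hat{x}}\mathcal{M}[\mathbf{1}]\,L_c+o(L_c),
\end{align*}
into \eqref{eq:elcc-shifted}. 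Cancelling $\mathcal{M}(\boldsymbol{\hat{x}},\boldsymbol{L})$ and rearranging yields
\[
L_c \;=\; \frac{\partial_{\hat{x}}\mathcal{M}[\boldsymbol{A}_g]}{\partial_{\hat{x}}\mathcal{M}[\mathbf{1}]}\,\Delta x
\;+\;\frac{o(\Delta x)-o(L_c)}{\partial_{\hat{x}}\mathcal{M}[\mathbf{1}]}.
\]

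The main obstacle is converting the $o(L_c)$ term into $o(\Delta x)$, which requires an a priori bound $L_c=O(\Delta x)$. I would establish this by dividing the matched expansions by $\Delta x$ and isolating $L_c/\Delta x$, giving
\[
\frac{L_c}{\Delta x}\Bigl(\partial_{\hat{x}}\mathcal{M}[\mathbf{1}]+o(1)\Bigr)
=\partial_{\hat{x}}\mathcal{M}[\boldsymbol{A}_g]+o(1),
\]
where the $o(1)$ on the left uses $L_c\to 0$ and the definition of the directional derivative. Since $\partial_{\hat{x}}\mathcal{M}[\mathbf{1}]$ is bounded away from zero at the regular baseline, the ratio $L_c/\Delta x$ is bounded as $\Delta x\to 0$, hence $L_c=O(\Delta x)$ and $o(L_c)=o(\Delta x)$. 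Plugging this back into the earlier rearrangement gives the claimed first-order expression for $L_c$, and dividing by $\Delta x$ immediately yields the stated formula for $\alpha_g^{\mathrm{ELCC}}$. A minor secondary point I would address is that Assumption~\ref{ass:diff} only posits directional (Gâteaux) derivatives, so the expansion on the second line must be invoked along the single ray $c\mapsto \boldsymbol{\hat{x}}+c\mathbf{1}$ rather than as a full Fréchet expansion; this is automatic since $L_c\mathbf{1}$ is a scalar multiple of the admissible direction $\mathbf{1}$.
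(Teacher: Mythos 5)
Your proof is correct and follows essentially the same route as the paper's: expand both sides of \eqref{eq:elcc-shifted} to first order using the directional derivatives from Assumption~\ref{ass:diff} and match the leading terms. The only difference is that you explicitly establish $L_c=O(\Delta x)$ before converting the $o(L_c)$ remainder into $o(\Delta x)$, a step the paper's proof absorbs silently into its joint $o(\Delta x + L_c)$ remainder, so your version is if anything slightly more careful.
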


\begin{proof}
Using the equivalent ELCC formulation \eqref{eq:elcc-shifted}, define
\[
F(\Delta x,c)\ :=\ \mathcal{M}(\boldsymbol{\hat{x}}+\Delta x\,\boldsymbol{A}_{g},\boldsymbol{L})
-\mathcal{M}(\boldsymbol{\hat{x}}+c\mathbf{1},\boldsymbol{L}).
\]
Then $F(0,0)=0$ and the ELCC condition is $F(\Delta x,L_c)=0$. By Assumption~\ref{ass:diff},
the directional derivatives at $(0,0)$ satisfy
\[
\left.\frac{\partial F}{\partial \Delta x}\right|_{(0,0)}=\partial_{\hat{x}}\mathcal{M}[\boldsymbol{A}_{g}],
\qquad
\left.\frac{\partial F}{\partial c}\right|_{(0,0)}=-\,\partial_{\hat{x}}\mathcal{M}[\mathbf{1}].
\]
A first-order expansion around $(0,0)$ gives
\[
0=F(\Delta x,L_c)
=\partial_{\hat{x}}\mathcal{M}[\boldsymbol{A}_{g}]\,\Delta x
-\partial_{\hat{x}}\mathcal{M}[\mathbf{1}]\,L_c
+o(\Delta x + L_c),
\]
which rearranges to
\[
L_c=\frac{\partial_{\hat{x}}\mathcal{M}[\boldsymbol{A}_{g}]}{\partial_{\hat{x}}\mathcal{M}[\mathbf{1}]}\,\Delta x
+o(\Delta x).
\]
Dividing by $\Delta x$ yields the expression for $\alpha_g^{\mathrm{ELCC}}$.
\end{proof}

\subsection{Marginal Reliability Impact (MRI)}

The MRI formalizes the directional sensitivity of the risk measure to an incremental addition of resource $g$.

\begin{definition}[MRI]\label{def:mri}
For $\Delta x>0$, define the finite-difference MRI of $g$ at $(\boldsymbol{\hat{x}},\boldsymbol{L})$ by
\begin{equation}
\mathrm{MRI}_g(\Delta x)
:=\frac{\mathcal{M}(\boldsymbol{\hat{x}}+\Delta x\,\boldsymbol{A}_{g},\boldsymbol{L})
-\mathcal{M}(\boldsymbol{\hat{x}},\boldsymbol{L})}{\Delta x}.
\end{equation}
Based on Assumption~\ref{ass:diff}, the MRI can be formulated as: 
\begin{equation}
\mathrm{MRI}_g
:= \partial_{\hat{x}}\mathcal{M}[\boldsymbol{A}_{g}]
= \lim_{\Delta x\rightarrow 0}\mathrm{MRI}_g(\Delta x).
\end{equation}
\end{definition}

\begin{definition}[Perfect Resource]\label{def:perfect}
A perfect resource delivers one unit in every hour, i.e. its availability trajectory is $\mathbf{1}$. Its marginal MRI is $ \mathrm{MRI}_{\mathrm{perf}}:=\partial_{\hat{x}}\mathcal{M}[\mathbf{1}]$.
\end{definition}

\begin{definition}[MRI-based Capacity Accreditation]\label{def:mri-accr}
The MRI-based accreditation factor of $g$ is defined by normalizing its marginal impact by the perfect resource:
\begin{equation}
\alpha_g^{\mathrm{MRI}}
:= \frac{\mathrm{MRI}_g}{\mathrm{MRI}_{\mathrm{perf}}}
= \frac{\partial_{\hat{x}}\mathcal{M}[\boldsymbol{A}_{g}]}{\partial_{\hat{x}}\mathcal{M}[\mathbf{1}]}.
\end{equation}
\end{definition}

\subsection{Equivalence of Capacity Accreditation}

We now state and prove the local equivalence between ELCC- and MRI-based capacity accreditation, a relationship that has also been discussed informally in \cite{ferc2025mri}.

\begin{proposition}[Equivalence of ELCC and MRI Accreditation]\label{prop:equivalence}
Assume $\mathcal{M}$ is the EUE-based risk measure. Under Assumptions~\ref{ass:shortfall}, \ref{ass:mono}, and \ref{ass:diff}, the marginal ELCC accreditation factor equals the MRI-based accreditation factor:
\[
\alpha_g^{\mathrm{ELCC}}
=\alpha_g^{\mathrm{MRI}}
=\frac{\partial_{\hat{x}}\mathcal{M}[\boldsymbol{A}_{g}]}{\partial_{\hat{x}}\mathcal{M}[\mathbf{1}]}.
\]
\end{proposition}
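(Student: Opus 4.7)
The plan is to observe that the equivalence follows almost immediately by chaining Proposition~\ref{prop:elcc-linear} with Definition~\ref{def:mri-accr}. Proposition~\ref{prop:elcc-linear} already produces the limiting ratio $\partial_{\hat{x}}\mathcal{M}[\boldsymbol{A}_g]/\partial_{\hat{x}}\mathcal{M}[\mathbf{1}]$ as the first-order behavior of $L_c/\Delta x$, while Definition~\ref{def:mri-accr} sets $\alpha_g^{\mathrm{MRI}}$ equal to precisely this ratio. So the only non-trivial tasks are to justify interpreting the ``marginal ELCC accreditation factor'' as a limit of the finite-$\Delta x$ ELCC fraction and to confirm that the ratio is well-defined.

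Concretely, I would proceed in three short steps. First, I would invoke Proposition~\ref{prop:exist-unique} to guarantee that $\Delta x \mapsto L_c(\Delta x)$ is well-defined and unique on a neighborhood of zero under Assumptions~\ref{ass:shortfall}--\ref{ass:diff}, so that $\alpha_g^{\mathrm{ELCC}}(\Delta x) := L_c(\Delta x)/\Delta x$ makes sense. Second, I would apply Proposition~\ref{prop:elcc-linear} to read off $\lim_{\Delta x \to 0^+}\alpha_g^{\mathrm{ELCC}}(\Delta x) = \partial_{\hat{x}}\mathcal{M}[\boldsymbol{A}_g]/\partial_{\hat{x}}\mathcal{M}[\mathbf{1}]$, which is the definition of marginal ELCC. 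Third, I would compare with Definition~\ref{def:mri-accr} to identify the two accreditation factors and close the chain.

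The main obstacle, which is largely absorbed into the cited results, is ensuring that the denominator $\partial_{\hat{x}}\mathcal{M}[\mathbf{1}]$ is strictly nonzero; otherwise the asymptotic expansion in Proposition~\ref{prop:elcc-linear} cannot be solved for $L_c$ and neither accreditation factor is meaningful. To discharge this I would argue that Assumption~\ref{ass:shortfall} places the baseline on the positive-shortfall side of the EUE map, so there exist hours, on a set of scenarios with positive probability, where $L_{\tau\omega} > \hat{x}_{\tau\omega}$. On such hours an infinitesimal uniform capacity addition strictly reduces unserved energy, which by Assumption~\ref{ass:mono} yields $\partial_{\hat{x}}\mathcal{M}[\mathbf{1}] < 0$, and the regularity clause in Assumption~\ref{ass:diff} ensures that no cancellation occurs at a kink of the piecewise-linear EUE functional. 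With the denominator controlled, the remaining identification is purely algebraic.
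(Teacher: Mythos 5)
Your proposal is correct and follows essentially the same route as the paper, whose proof is exactly the one-line chaining of Proposition~\ref{prop:elcc-linear} with Definition~\ref{def:mri-accr} under the limit $\Delta x\to 0$. Your additional checks (well-posedness of $L_c(\Delta x)$ via Proposition~\ref{prop:exist-unique} and strict negativity of $\partial_{\hat{x}}\mathcal{M}[\mathbf{1}]$ from the baseline-shortfall and regularity assumptions) are sound elaborations of details the paper leaves implicit.
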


\begin{proof}
Taking $\Delta x\to 0$ and applying Proposition~\ref{prop:elcc-linear} and Definition~\ref{def:mri-accr} yields the claim.
\end{proof}

\emph{Remark}: Since most practical capacity accreditation exercises rely on finite (non-infinitesimal) perturbations, the resulting accreditation values may exhibit slight dependence on the chosen perturbation size \cite{fengmri}.

\section{Infinitesimal Perturbation Analysis} \label{sec:IPA}
Gradient-based capacity accreditation relies on the existence of a well-defined \emph{infinitesimal} sensitivity of the risk measure $\mathcal{M}(\boldsymbol{\hat{x}},\boldsymbol{L})$ with respect to available capacity. This section presents the infinitesimal perturbation analysis framework for characterizing and interpreting such sensitivities through unbiased pathwise derivatives.

\subsection{Infinitesimal Perturbation Analysis}
In a stochastic simulation setting, the key theoretical question is whether this gradient can be interpreted as the expectation of a scenario-wise (sample-path) derivative—i.e., whether differentiation can be interchanged with the risk operator (here, expectation). This interchange is the foundation of \emph{infinitesimal perturbation analysis} (IPA) and provides a rigorous meaning for ``marginal reliability impact'' as a true derivative rather than a finite-difference artifact \cite{fu2012conditional,mohamed2020monte}.

\begin{definition}[Infinitesimal Perturbation Analysis (IPA) \cite{fu2012conditional,mohamed2020monte}]\label{def:ipa}
Let $F(\boldsymbol{\hat{x}},\boldsymbol{L};\omega)$ denote a scenario-wise performance measure (e.g., unserved energy), and let the risk measure be
$\mathcal{M}(\boldsymbol{\hat{x}},\boldsymbol{L})=\mathbb{E}_\omega[F(\boldsymbol{\hat{x}},\boldsymbol{L};\omega)]$.
IPA refers to estimating a directional derivative of $\mathcal{M}$ using the corresponding \emph{pathwise} directional derivative of $F$, i.e.,
\[
\partial_{\hat{x}}\mathcal{M}[\boldsymbol{v}]
=\!\partial_{\hat{x}} \left[\mathbb{E}_\omega F[\boldsymbol{v}]\right]=\mathbb{E}_\omega\!\left[\partial_{\hat{x}}F[\boldsymbol{v}]\right],
\]
whenever the interchange of differentiation and expectation is valid. In this case, $\partial_{\hat{x}}F[\boldsymbol{v}]$ provides an unbiased single-scenario contribution to the gradient, enabling step-size-free sensitivity estimation.
\end{definition}

\begin{proposition}[IPA for EUE]
\label{prop:ipa}
Consider the EUE setting where the scenario-wise unserved energy is
\begin{equation}
\begin{aligned}
\mathcal{M}(\boldsymbol{\hat{x}},\boldsymbol{L})
:=\mathbb{E}_\omega\!\left[F(\boldsymbol{\hat{x}},\boldsymbol{L};\omega)\right]= \mathbb{E}_\omega\sum_{\tau\in\mathcal{T}}\big(L_{\tau\omega}-\hat{x}_{\tau\omega}\big)_+
\end{aligned}    
\end{equation}
Under Assumptions~\ref{ass:shortfall}, \ref{ass:mono}, and \ref{ass:diff}, the following mild regularity conditions hold: (i) $\mathbb{P}\{L_{\tau\omega}=\hat{x}_{\tau\omega}\}=0$ for all $\tau\in\mathcal{T}$, and (ii) $F(\boldsymbol{\hat{x}},\boldsymbol{L};\omega)$ is integrable and uniformly Lipschitz in $\boldsymbol{\hat{x}}$ in a neighborhood of the baseline. Consequently, differentiation and expectation can be interchanged. In particular, $\mathcal{M}$ admits a directional derivative in any direction $\boldsymbol{v}\in\mathbb{R}^{|\mathcal{T}|}$ and
\begin{equation}\label{eq:ipa-eue}
\partial_{\hat{x}}\mathcal{M}[\boldsymbol{v}]
=\mathbb{E}_\omega\!\left[\partial_{\hat{x}}F[\boldsymbol{v}]\right]
=
-\,\mathbb{E}_\omega\!\left[\sum_{\tau\in\mathcal{T}} v_\tau\,\mathbbm{1}\!\left\{L_{\tau\omega}>\hat{x}_{\tau\omega}\right\}\right].
\end{equation}
\end{proposition}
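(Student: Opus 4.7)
The plan is a standard dominated-convergence argument for IPA applied to the piecewise-linear EUE functional: establish pathwise directional differentiability of $F$ almost surely at the baseline, produce a deterministic integrable envelope for the difference quotients, and then invoke the dominated convergence theorem to interchange differentiation and expectation.

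First I would analyze the scenario-wise map $\epsilon\mapsto F(\boldsymbol{\hat{x}}+\epsilon\boldsymbol{v},\boldsymbol{L};\omega)=\sum_{\tau\in\mathcal{T}}(L_{\tau\omega}-\hat{x}_{\tau\omega}-\epsilon v_\tau)_+$. Since $u\mapsto(u)_+$ is differentiable everywhere except at $u=0$ with derivative $\mathbbm{1}\{u>0\}$, and $|\mathcal{T}|<\infty$, hypothesis (i) together with a finite union bound gives
\[
\mathbb{P}\!\left(\bigcup_{\tau\in\mathcal{T}}\{L_{\tau\omega}=\hat{x}_{\tau\omega}\}\right)=0.
\]
On the complementary probability-one event, the chain rule yields the pathwise directional derivative
\[
\partial_{\hat{x}}F[\boldsymbol{v}](\omega)=-\sum_{\tau\in\mathcal{T}}v_\tau\,\mathbbm{1}\{L_{\tau\omega}>\hat{x}_{\tau\omega}\}.
\]

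Next I would produce the dominating function. Because each summand $(\cdot)_+$ is $1$-Lipschitz, for every $\omega$ and every $\epsilon\neq 0$ in a neighborhood of zero,
\[
\left|\frac{F(\boldsymbol{\hat{x}}+\epsilon\boldsymbol{v},\boldsymbol{L};\omega)-F(\boldsymbol{\hat{x}},\boldsymbol{L};\omega)}{\epsilon}\right|\ \le\ \sum_{\tau\in\mathcal{T}}|v_\tau|,
\]
which is a deterministic, finite (hence integrable) bound consistent with the uniform Lipschitz hypothesis (ii). Combining this envelope with the almost-sure pathwise derivative, the dominated convergence theorem justifies swapping limit and expectation:
\[
\partial_{\hat{x}}\mathcal{M}[\boldsymbol{v}]=\lim_{\epsilon\to 0}\mathbb{E}_\omega\!\left[\frac{F(\boldsymbol{\hat{x}}+\epsilon\boldsymbol{v},\boldsymbol{L};\omega)-F(\boldsymbol{\hat{x}},\boldsymbol{L};\omega)}{\epsilon}\right]=\mathbb{E}_\omega\!\left[\partial_{\hat{x}}F[\boldsymbol{v}]\right],
\]
and substituting the pathwise expression recovers \eqref{eq:ipa-eue}.

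The main conceptual obstacle is ensuring that the almost-sure exception set arising from the kinks of $(\cdot)_+$ does not affect the conclusion; this is handled cleanly by the finiteness of $\mathcal{T}$ together with the atomlessness condition (i). Without (i) one would lose pointwise differentiability at points where $L_{\tau\omega}=\hat{x}_{\tau\omega}$ occurs with positive probability, and the statement would need to be reformulated using one-sided directional derivatives or Clarke subgradients, as flagged in the remark following Assumption~\ref{ass:diff}. Everything else is a routine application of dominated convergence to a 1-Lipschitz, piecewise-linear integrand.
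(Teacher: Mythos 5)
Your proposal is correct and follows essentially the same route as the paper's proof: almost-sure pathwise differentiability from the measure-zero kink condition (i), an integrable dominating bound for the difference quotients, and dominated convergence to interchange limit and expectation. The only difference is one of detail --- you explicitly derive the deterministic envelope $\sum_{\tau}|v_\tau|$ from the $1$-Lipschitz property of $(\cdot)_+$ (showing that the bound in condition (ii) is automatic for EUE), whereas the paper simply cites condition (ii) as supplying it.
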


\begin{proof}[Proof]
For each $\tau$, the mapping $\hat{x}_{\tau\omega}\mapsto (L_{\tau\omega}-\hat{x}_{\tau\omega})_+$ is piecewise linear with slope $-1$ when $L_{\tau\omega}>\hat{x}_{\tau\omega}$ and slope $0$ otherwise. Under (i), the kink event occurs with probability zero, so the pathwise derivative exists almost surely. Condition (ii) supplies an integrable dominating bound, allowing differentiation and expectation to be interchanged via dominated convergence, which yields \eqref{eq:ipa-eue}.
\end{proof}

Proposition~\ref{prop:ipa} provides two key interpretations for gradient-based capacity accreditation. First, while calculating Loss of Load Probability (LOLP)-weighted average availability has long been suggested as a simpler heuristic for ELCC, but never formally justified \cite{milligan1997comparison}. We formalize that the ``marginal impact'' of capacity is a well-defined sensitivity of the \emph{expected} shortfall, rather than a numerical artifact arising from a particular perturbation size. 

Second, as illustrated in Fig.~\ref{ipa}, \eqref{eq:ipa-eue} yields an explicit pathwise estimator for this sensitivity: the directional derivative corresponds to the expected \emph{shortage-weighted} availability along a given direction $\boldsymbol{v}$. Unlike finite-difference approaches, which require repeated simulations with different capacity increments and are sensitive to step-size selection, the IPA approach computes gradients directly by aggregating pathwise contributions from a single simulation run. 

In particular, choosing $\boldsymbol{v}=\boldsymbol{A}_g$ yields the gradient direction associated with resource $g$, while choosing $\boldsymbol{v}=\mathbf{1}$ recovers the perfect-resource benchmark defined in Definition~\ref{def:perfect}. As a result, IPA enables gradient-based capacity accreditation for \emph{all} non-storage resources to be obtained from \emph{one} resource adequacy simulation, substantially reducing computational burden while maintaining high estimation accuracy. For storage resources, pathwise sensitivity becomes more complex, as it depends on the full period profile and the assumed dispatch policy.

\subsection{Pathwise Sensitivity}

Proposition~\ref{prop:ipa} implies that, under an EUE-based risk measure, marginal capacity value is determined by \emph{coincidence with scarcity}. Define the baseline scarcity indicator
$S_{\tau\omega}:=\mathbbm{1}\{L_{\tau\omega}>\hat{x}_{\tau\omega}\}$.
Then the directional derivative can be expressed as
\begin{equation}\label{eq:pathwise-inner}
\partial_{\hat{x}}\mathcal{M}[\boldsymbol{v}]
= -\,\mathbb{E}_\omega\!\left[\sum_{\tau\in\mathcal{T}} v_\tau\,S_{\tau\omega}\right].
\end{equation}
Equation~\eqref{eq:pathwise-inner} provides a direct interpretation: an incremental unit of capacity in period $\tau$ reduces EUE only in those scenarios and hours in which the system is already short.

While IPA offers substantial computational convenience for accrediting individual resources, evaluating the \emph{joint} contribution of portfolios that include intertemporal storage resources requires additional care when interpreting \eqref{eq:pathwise-inner}. For conventional generators or renewables, the perturbation direction $\boldsymbol{v}$ can often be treated as exogenous (e.g., $\boldsymbol{v}=\boldsymbol{A}_g$). In contrast, the net injection profile of storage is induced by operational decisions (charging/discharging subject to state-of-charge constraints) and therefore depends on an operating policy and on scenario history. It is thus more appropriate to represent the perturbation direction as $\boldsymbol{v}(\pi)$ for a specified policy $\pi$, potentially scenario-dependent, rather than as a fixed time series \cite{stephen2022impact}. Some optimization-based formulations can be useful for evaluating the MRI of storage resources by explicitly modeling intertemporal operating decisions and state-of-charge constraints \cite{qianstorage}.

This observation also clarifies why the accreditation of a combined resource is generally not additive. When storage is paired with \emph{any} other resources—\emph{whether renewable, gas, or nuclear}— its feasible and optimal actions are coupled to the paired resource’s availability history (through charging opportunities) and to system scarcity conditions. As a result, the marginal contribution of the combined portfolio cannot, in general, be recovered by separately accrediting each component and summing the resulting credits. The interaction arises from the fact that the storage-induced direction $\boldsymbol{v}(\pi)$ is not invariant to the presence or temporal pattern of the paired resource; instead, the portfolio modifies the effective coincidence with scarcity in \eqref{eq:pathwise-inner} through an endogenous reshaping of injections over time.

\begin{figure}[H]
\centering
  \includegraphics[width=0.5\textwidth]{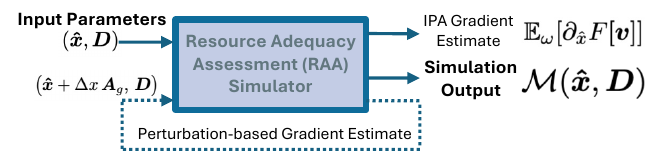}
  \caption{Comparison between IPA-based direct gradient estimation and perturbation-based gradient estimation.}
  \label{ipa}
\end{figure}

\section{Perturbation-based Computation and Practical Implications} \label{sec:CC}
While IPA \eqref{eq:ipa-eue} provides a direct and efficient approach for gradient computation, its application typically requires access to internal information from the RAA simulator, which may not always be available in practice. In addition, evaluating joint resource portfolios can pose further challenges. As compared in Fig. \ref{ipa}, industry implementations commonly rely on repeated RAA simulations using perturbation-based methods. This section discusses practical considerations in such simulations and highlights the fundamental computational differences between ELCC- and MRI-based capacity accreditation.

\subsection{Sample Size in Resource Adequacy Assessment} 
Determining the appropriate sample size of MC simulation for a probabilistic resource adequacy assessment requires balancing computational effort with the desired statistical accuracy of the study outcomes. Define the i.i.d.\ random variable $ Z = \mathcal{M}(\boldsymbol{\hat{x}},\boldsymbol{L})$. Given $n$ simulated scenarios $Z_1,\dots, Z_n$ under representatively sampled random operating conditions, the MC estimator is $\bar{Z}_n=\tfrac{1}{n}\sum_{i=1}^n Z_i$. By the Central Limit Theorem (CLT), for large $n$ the estimator $\bar Z_n$ is approximately normally distributed with variance $s^2/n$, where $s^2$ is the sample variance of the observations.

While increasing $n$ helps achieve a target confidence interval, the nonnegative structure of RA metrics introduces important limitations. Reliability shortfalls are rare events, so most simulated scenarios yield ``all-zero'' outcomes with no unserved energy. Some alternative approaches, such as the \emph{zero-shortfall sample} (CLT+) method and \emph{Bayesian prior}–based estimation, were proposed recently to improve inference accuracy \cite{stephen2024sample}.

\subsection{Step Size in Perturbation-based Gradient Estimation} 

Both ELCC and MRI ultimately require estimating how the risk measure changes under a small perturbation to available capacity. When gradients are approximated by finite differences, one must choose a \emph{perturbation step size} (e.g., a small MW change), which creates a bias--variance tradeoff.

\textbf{Step Size Selection}: Let $\delta > 0$ be a small capacity perturbation. A central difference estimator is given by:
\begin{equation} \label{eq:fd_gradient}
\widehat{\mathrm{MRI}}_g(\delta) = \frac{\mathcal{M}(\boldsymbol{\hat{x}} + \delta \boldsymbol{A}_g, \boldsymbol{L}) - \mathcal{M}(\boldsymbol{\hat{x}} - \delta \boldsymbol{A}_g, \boldsymbol{L})}{2\delta}.
\end{equation}

The choice of $\delta$ involves a fundamental bias–variance trade-off. If $\delta$ is too large, the finite-difference estimator approximates a secant rather than the desired tangent, introducing truncation bias. Conversely, if $\delta$ is too small, the difference in the numerator becomes dominated by Monte Carlo variability, inflating variance through cancellation error. To mitigate this effect, industry implementations typically rely on \emph{Common Random Numbers} (CRN), i.e., using identical random-number streams (or equivalently, the same set of sampled scenarios) for both the baseline and perturbed simulations. This practice induces positive correlation between the paired outputs, which reduces the variance of the difference estimator relative to using independent simulations \cite{glasserman1992crn}. In practice, fixing the number of samples and reusing the same scenarios for baseline and perturbed runs serves a similar purpose by ensuring that differences between runs are attributable primarily to the perturbation rather than sampling variability.

From a statistical perspective, the residual \emph{simulation noise} reflects the uncertainty inherent in Monte Carlo estimates of reliability metrics. In resource adequacy studies, this uncertainty is commonly quantified using the relative standard error (RSE), defined as the ratio of the estimator’s standard deviation to its mean. Components of the estimated change that are on the order of, or smaller than, the RSE are effectively indistinguishable from noise. For example, any part of the result smaller than RSE would be simulation noise in the ISO New England practice.

Theoretically, for central finite-difference gradient estimators under independent simulation noise, minimizing mean-square error leads to an optimal step size that scales with $N^{-1/6}$, reflecting a balance between squared bias $O(\delta^4)$ and variance $O(1/(N\delta^2))$ \cite{zazanis1993convergence}. While CRN and large fixed sample sizes can substantially reduce variance and allow for smaller $\delta$, a practical lower bound remains due to the discrete nature of reliability outcomes. If $\delta$ is smaller than the effective “distance” to the next state change, such as the increment required to trigger an additional shortage hour, then the finite difference $\mathcal{M}(+\delta)-\mathcal{M}(-\delta)$ may evaluate to zero for certain metrics, yielding a vanishing gradient estimate. Consequently, practitioners typically select the smallest perturbation size that produces changes exceeding the RSE threshold, thereby preserving numerical sensitivity while controlling bias.

In industry implementations, different system operators adopt perturbation strategies that reflect both computational considerations and modeling practices. PJM Interconnection computes marginal ELCC using 100~MW increments for both resource-class and unit-specific studies \cite{pjm2025elcc}. ISO New England has proposed smaller perturbations for MRI-based calculations, applying a 0.5~MW perturbation to resource capacities while constructing demand curves by sweeping system capacity in 10~MW increments around the ICR \cite{iso_ne2023_demand_curves}. Through extensive testing, the New York ISO has determined that MRI provides a sufficiently accurate approximation of ELCC at a fraction of the computational cost, and therefore currently applies MRI using a 100~MW perturbation step size in its capacity accreditation framework \cite{nyiso2022icapwg_capacity_accreditation}.

\subsection{Fundamental Difference Between ELCC and MRI-based Capacity Accreditation}

\begin{figure*}[t]
\centering
  \includegraphics[width=1\textwidth]{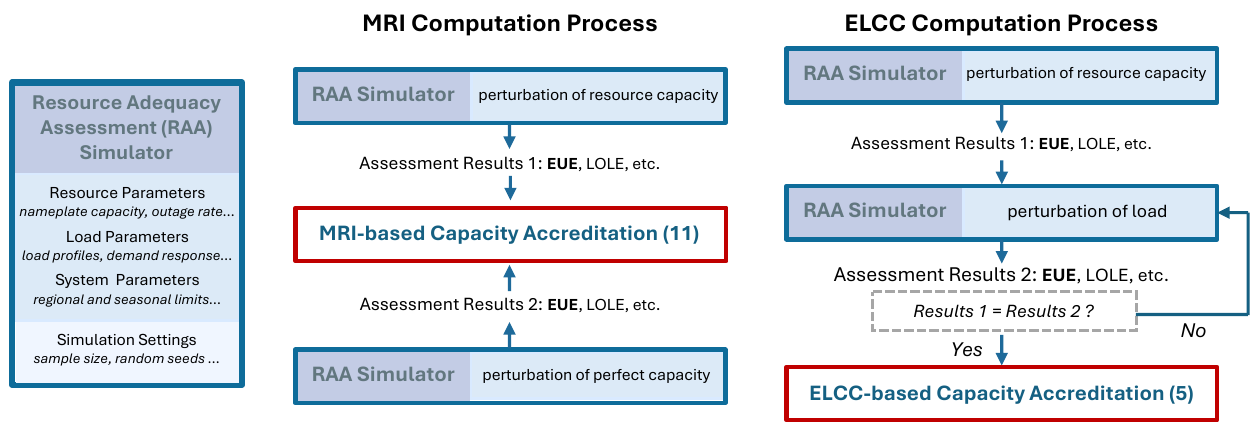}
   \caption{Computation Process of \emph{Perturbation-based} ELCC and MRI Capacity Accreditation}
  \label{compare}
\end{figure*}

The MRI ratio is attractive computationally because it avoids solving the implicit equation for $L_c$, but requires only two directional sensitivities, which can be estimated by finite differences within a single simulation run.

Different from the MRI method, the ELCC requires an iterative search procedure to find the equivalent load-carrying capacity. Recall that the ELCC $L_c$ is the root of the nonlinear equation:
\begin{equation}
g(c) = \mathcal{M}\big(\boldsymbol{\hat{x}}+\Delta x\,\boldsymbol{A}_{g},\,\boldsymbol{L}+c\,\mathbf{1}\big) -\mathcal{M}\big(\boldsymbol{\hat{x}},\,\boldsymbol{L}\big).
\end{equation}
Solving this equation for $c = L_c$ creates a significant computational burden characterized by the following factors:

\subsubsection{Iterative Root-Finding}
Unlike the MRI, which is a local linear approximation evaluated at the current system state, ELCC seeks a specific point on the reliability surface where risk is invariant. Since the risk function $\mathcal{M}$ does not have a closed-form inverse, $L_c$ must be found using numerical root-finding algorithms, typically a bisection search method. 

\subsubsection{Nested Simulation Loops}
Fig. \ref{compare} illustrates the computation process of perturbation-based ELCC and MRI capacity accreditation. The computational cost of ELCC is multiplicative. Let $C_{\text{sim}}$ be the cost of one resource adequacy simulation (e.g., a full sequential Monte Carlo year). Assuming the baseline system and perfect increment capacity scenarios have been simulated, then for each study resource, we have
\begin{itemize}
    \item \textbf{MRI Complexity:} The cost is $\approx C_{\text{sim}}$. All marginal accreditation factors can be computed in a single run.
    \item \textbf{ELCC Complexity:} If the root-finding algorithm requires $k$ iterations to converge to a specified tolerance (e.g., $\epsilon_{MW}$), the cost for a \emph{single} resource is $k \times C_{\text{sim}}$.
\end{itemize}
In practice, $k$ often ranges from 5 to 15 iterations depending on the convexity of the risk curve and the required precision. 

\subsubsection{Accelerated Root-Finding via Gradient Methods}

Standard industry tools typically employ \emph{bisection search} (bilateral search) to solve the ELCC equation $g(c) = 0$. While bisection is robust against non-smoothness, it converges linearly, halving the uncertainty interval at each step. To reduce the computational burden, some gradient-based methods might accelerate the computing process.

The common root-finding process can be driven by \emph{Newton's Method}. The update rule is given by:
\begin{equation}
    c_{k+1} = c_k - \frac{g(c_k)}{g'(c_k)}.
\end{equation}
where the derivative $g'(c_k)$ represents the sensitivity of system reliability to a shift in firm load. 

However, in many practical implementations, computing the gradient $g'(c_k)$ at each iteration is either computationally expensive or not directly available, as it effectively requires a full MRI evaluation. In this paper, we therefore employ the \emph{secant method}, which retains the convergence acceleration benefits of gradient-based root finding while avoiding explicit gradient computations. This quasi-Newton approach approximates the derivative using the history of function evaluations from the two previous iterations, $c_k$ and $c_{k-1}$:
\begin{equation}
    c_{k+1} = c_k - g(c_k) \frac{c_k - c_{k-1}}{g(c_k) - g(c_{k-1})}.
\end{equation}
This method offers a superlinear rate of convergence and is significantly faster than bisection. Crucially, it requires only \textbf{one} reliability simulation per iteration (to evaluate $g(c_k)$), as the previous value $g(c_{k-1})$ is cached.

To improve robustness against stochastic noise inherent in Monte Carlo simulations, hybrid root-finding approaches (e.g., \emph{Brent’s method}), which combine bisection with secant-type steps for faster convergence, can also be effective. A detailed discussion of these methods is omitted due to space limitations.

\section{Numerical Case Study} \label{sec:nc}
\subsection{System Description}
The RTS-GMLC reliability test system represents a contemporary generation portfolio with resource characteristics calibrated to real-world data \cite{gmlc}. The system consists of a single region with 153 generators and a total installed capacity of 14{,}138~MW. The generation mix includes 73 thermal units (coal, gas, oil, and nuclear) totaling 8{,}126~MW, 20 hydro units totaling 1{,}000~MW, 56 solar units totaling 2{,}504~MW, and 4 wind farms totaling 2{,}508~MW. In addition, the system includes one battery storage resource with a power capacity of 50~MW and an energy duration of 3 hours. The system load profile is uniformly scaled to establish a baseline system reliability level close to the `1-in-10' standard. Under this baseline, the system peak load is 9{,}502.7~MW, corresponding to a system-wide LOLD of 2.10~hours/year and EUE of 394.2 MWh/year.

This study uses the Julia-based open-source simulator PRAS \cite{stephen2024probabilistic} with an hourly temporal resolution and a resource capacity resolution of 0.1~MW. For non-intermittent resources, scenarios are sampled from a distribution function representing the outage rate, while for intermittent resources, scenarios are sampled from historical data. To ensure high statistical accuracy, we employ 1{,}000{,}000 Monte Carlo samples for each RA assessment. All simulations are executed on the Harvard Research Computing cluster: each RA assessment utilizes 2 Intel Xeon Sapphire Rapids CPUs (112 cores total) with 256 GB of memory. The inter-regional transmission constraints are not the focus of this paper and are therefore excluded from the analysis.

\subsection{Capacity Accreditation Factors and Computation Time}

We first select 9 representative generators for gradient-based capacity accreditation, including gas combined-cycle (CC), gas combustion turbine (CT), coal, nuclear, oil, hydro, wind, PV, and battery storage. The unit-level results facilitate a direct methodological comparison. In practice, capacity accreditation is often assigned at the resource-class level based on a weighted average of individual generator accreditation.

For the three perturbation-based methods, i.e. ELCC with bisection, ELCC with secant search, and perturbation-based MRI, we apply a 10~MW perturbation relative to each resource’s nameplate capacity. For renewable resources, the perturbation is applied proportionally to the output profile, while for storage resources, both power and energy capacities are scaled proportionally to preserve duration. In addition, we apply the IPA-based approach to directly compute MRI for non-storage resources, which estimates gradients without requiring repeated perturbation simulations.

\begin{table}[H]
\caption{Comparison of Capacity Accreditation Factors}
\label{tab:comp_credit}
\centering
\begin{tabular}{lccccc}
\toprule
\textbf{Gen ID} & \begin{tabular}[c]{@{}c@{}}\textbf{ELCC} \\ \textbf{(Bisection)}\end{tabular} & \begin{tabular}[c]{@{}c@{}}\textbf{ELCC} \\ \textbf{(Secant)}\end{tabular}  & \begin{tabular}[c]{@{}c@{}}\textbf{MRI} \\ \textbf{(Perturb)}\end{tabular} & \begin{tabular}[c]{@{}c@{}}\textbf{MRI} \\ \textbf{(IPA)}\end{tabular} & \textbf{FOR}\\
\midrule
107 Gas CC   & 0.84 & 0.84 & 0.84 & 0.84 & 3.3\% \\
113 Gas CT   & 0.96 & 0.96 & 0.96 & 0.96 & 3.1\% \\
115 Coal     & 0.90 & 0.91 & 0.91 & 0.91 & 4.0\% \\
121 Nuclear  & 0.47 & 0.48 & 0.48 & 0.49 & 12.0\% \\
101 Oil CT*  & 0.90 & 0.89 & 0.89 & 0.89 & 10.0\% \\
322 Hydro    & 0.76 & 0.75 & 0.76 & 0.75 & -- \\
122 Wind     & 0.09 & 0.08 & 0.09 & 0.08 & -- \\
215 PV       & 0.15 & 0.16 & 0.16 & 0.15 & -- \\
313 Storage  & 0.78 & 0.80 & 0.81 & -- & -- \\
\bottomrule
\multicolumn{6}{r}{\scriptsize *small unit size (20MW)}
\end{tabular}
\end{table}

As shown in Table~\ref{tab:comp_credit}, despite differences in numerical implementation, the resulting capacity accreditation factors are highly consistent across all methods. These results empirically validate the theoretical equivalence between ELCC and MRI and confirm the correctness of the IPA-based direct gradient computation approach.

Table~\ref{tab:comp_time} compares the computational performance of the 4 accreditation methods. To quantify the computational burden, we estimate the total number of RAA runs required for the full system as $ 1 + N_g N_p$. This metric accounts for a single baseline system simulation (the ``1'') plus the individual evaluations required for each of the $N_g$ generators, scaled by the average number of perturbations per resource ($N_p$).

\begin{table}[H]
\centering
\caption{Average computation time per generator (seconds), average perturbation evaluations per resource ($N_p$), and implied total number of RAA runs for whole system.}
\label{tab:comp_time}

\begin{threeparttable}
    \small
    \setlength{\tabcolsep}{4pt}
    \renewcommand{\arraystretch}{1.05}
    
    \begin{tabular}{cccc}
    \toprule
    \multirow{2}{*}{\centering Method}
    & Avg. Time 
    & Avg. \# of  
    & Total \# of \\
    & (per generator) 
    & Perturb. ($N_p$)
    & RAA \\
    \midrule
    ELCC (Bisection) & 452.8 & $\approx$ 7 & $ 1 + N_g N_p$ \\
    ELCC (Secant)    & 307.5 & $\approx$ 3.7 & $ 1 + N_g N_p$ \\
    MRI (Perturb)    & 65.7  & $1$         & $ 1 + N_g N_p$ \\
    MRI (IPA)        & 0.4\tnote{*} & $0$ & $1$ \\
    \bottomrule
    \end{tabular}
    \begin{tablenotes}
        \footnotesize
        \item[*] Calculated by dividing the time of a single IPA run by $N_g$, since one run provides gradients for \emph{all} 153 generator in RTS-GMLC system.
    \end{tablenotes}    
\end{threeparttable}
\end{table}

The MRI-based approach is substantially faster than ELCC. Among perturbation-based methods, MRI achieves an order-of-magnitude speedup relative to bisection-based ELCC. This efficiency arises because MRI requires only one perturbation per resource ($N_p = 1$) to estimate capacity factors from marginal sensitivities, whereas ELCC requires multiple iterative simulations ($N_p \approx 7$) to solve an implicit reliability-equivalence condition. Within the ELCC framework, the secant method proposed in this paper improves upon standard bisection. By exploiting gradient information to accelerate the root-finding process, the secant method reduces the required perturbations to $N_p \approx 3.7$, yielding a computation time reduction of approximately 32\%.

Finally, the IPA-based gradient estimation method provides the most efficient solution. It requires only a single simulation run (Total \# of RAA $=1$) to compute MRI for \emph{all} non-storage resources simultaneously, effectively decoupling computational cost from the system size $N_g$. For the RTS-GMLC system with 153 generators, we report the effective per-resource time by dividing the single IPA runtime by $N_g$. As shown in Table~\ref{tab:comp_time}, this results in an approximately \textbf{1000$\times$} speedup relative to the conventional bisection-based ELCC method.

\subsection{Perturbation Step Size}

The choice of perturbation step size ($\Delta x$) plays a critical role in both ELCC and MRI calculations. Figure~\ref{perturbation} compares capacity accreditation factors computed using 3 methods for a range of perturbation sizes from 1~MW to 30~MW. Results are shown for 3 representative generators with distinct characteristics: a 55~MW Gas CT, a 400~MW Nuclear, and a PV plant with 115~MW peak output.

Across all cases, MRI and secant-based ELCC exhibit greater robustness to the choice of perturbation size, producing relatively stable accreditation factors over a wide range of $\Delta x$. In contrast, bisection-based ELCC shows higher sensitivity to step size, particularly for small perturbations, reflecting its reliance on repeated evaluations of noisy reliability estimates. This sensitivity is also influenced by the numerical resolution of the underlying simulator. In this study, a capacity resolution of 0.1~MW is used, implying that perturbations smaller than a few megawatts may fail to induce discernible changes in system states and thus amplify numerical noise in finite-difference estimates.

\begin{figure*}[t]
\centering
  \includegraphics[width=0.98\textwidth]{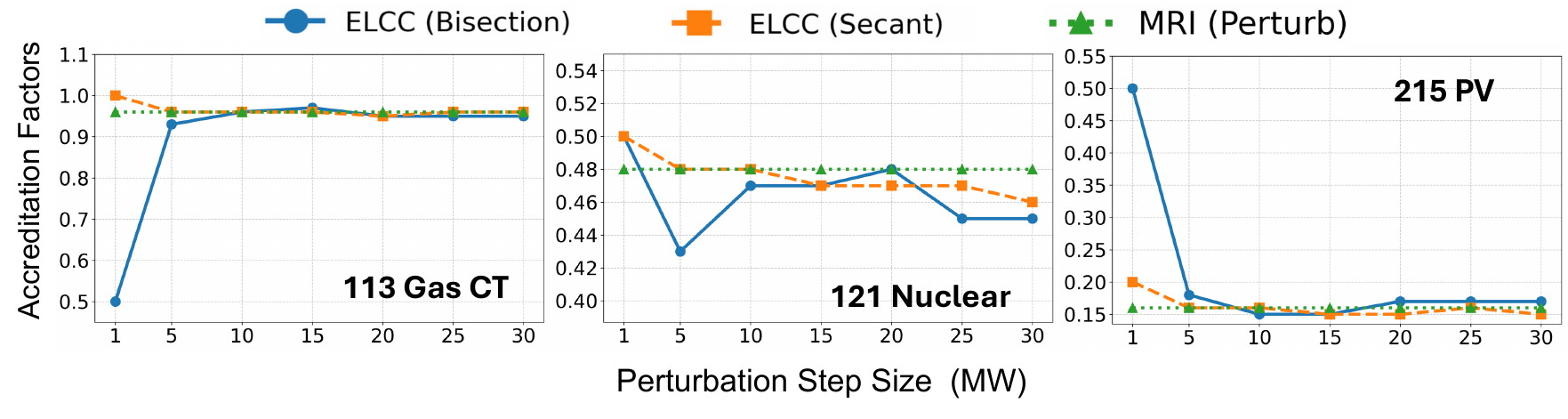}
  \caption{Effect of perturbation step size on accreditation factors for 3 generators}
  \label{perturbation}
\end{figure*}

\subsection{Perturbation Direction}

In addition to perturbation step size, gradient-based accreditation depends critically on the \emph{perturbation direction}, i.e., how incremental capacity is represented within the resource adequacy model. For conventional generators and variable renewables, the perturbation direction can typically be treated as exogenous and proportional to an availability trajectory, so the marginal impact is governed by the coincidence between the added availability and scarcity periods. For storage resources, however, the effective net-injection profile is endogenously determined by intertemporal operating decisions and state-of-charge constraints. This motivates representing storage perturbations as policy-induced directions $\boldsymbol{v}(\pi)$ rather than fixed time series, and implies that joint portfolios involving storage need not exhibit additive marginal impacts.

\begin{table}[H]
\caption{Comparison of EUE and Capacity Accreditation Factors under Different Perturbation Directions}
\label{tab:comp_direction}
\centering
\resizebox{\columnwidth}{!}{%
\begin{tabular}{ccccc}
\toprule
\textbf{Perturbation Direction} & \begin{tabular}[c]{@{}c@{}}\textbf{EUE} \\ \textbf{(MWh/year)}\end{tabular} & \begin{tabular}[c]{@{}c@{}}\textbf{$\Delta$ EUE} \\ \textbf{(MWh/year)}\end{tabular} & \begin{tabular}[c]{@{}c@{}}\textbf{MRI} \\ \textbf{(Perturb)}\end{tabular} \\
\midrule
5 MW Nuclear & 389.00 & 5.22 & 0.48 \\
5 MW PV & 392.51 & 1.72 & 0.16 \\
5 MW Storage & 385.51 & 8.71 & 0.81 \\
5 MW Nuclear, 5 MW Storage & 380.43 & 13.80 & 0.65 \\
5 MW PV, 5 MW Storage & 383.82 & 10.40 & 0.49 \\
\bottomrule
\end{tabular}%
}
\end{table}

Table~\ref{tab:comp_direction} illustrates the role of perturbation direction using perturbation-based MRI under several incremental additions. The second column reports the resulting EUE under each perturbation, while the third column reports the corresponding reduction relative to the baseline system. The joint perturbations do not equal the sum of the corresponding standalone effects, indicating that marginal impacts depend on how capacity additions interact temporally rather than on individual resource contributions alone.

This behavior is consistent with the pathwise sensitivity interpretation developed in Section~\ref{sec:IPA}. When storage is added, its dispatch reshapes the timing of effective injections by shifting energy across hours, thereby modifying the coincidence between incremental capacity and scarcity periods. Because storage charging opportunities and discharge incentives depend jointly on the availability history of other resources and on system stress conditions, the resulting perturbation direction $\boldsymbol{v}(\pi)$ is scenario-dependent and generally changes when another resource is added simultaneously.

Importantly, this interaction is not specific to renewable resources. Any pairing of storage with another resource can lead to non-additive marginal impacts, which is also categorized as \emph{synergistic} and \emph{antagonistic} combination effects in \cite{schlag2020capacity}. Consequently, for storage-inclusive portfolios, defensible accreditation requires explicitly specifying the operating policy used to generate $\boldsymbol{v}(\pi)$, and joint portfolios should be evaluated directly rather than by separately accrediting components and summing their individual credits.

\begin{figure*}[t]
\centering
  \includegraphics[width=0.98\textwidth]{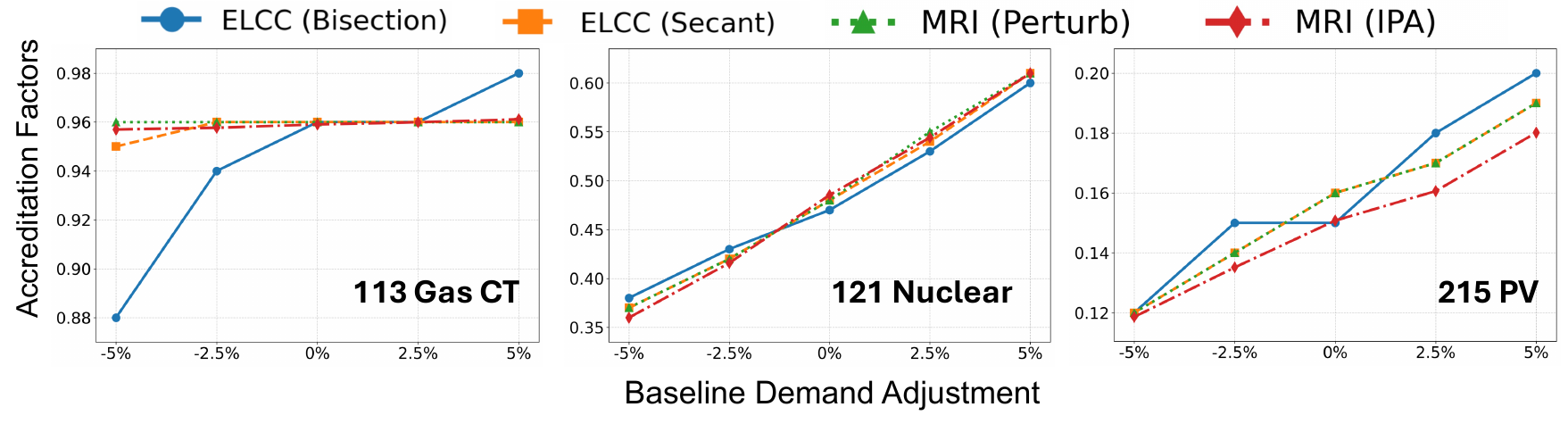}
   \caption{Effect of baseline load adjustment on accreditation factors for 3 generators}
  \label{profile}
\end{figure*}

\subsection{Baseline System Reliability Levels}

We evaluate the sensitivity of capacity accreditation to system stress by scaling baseline load from $-5\%$ to $+5\%$ relative to the reference system. Figure~\ref{profile} reports results for the same three representative generators. Across all methods, accreditation factors increase monotonically with system stress. As baseline load rises toward the $+5\%$ case, reserve margins shrink and each incremental MW becomes more valuable for maintaining adequacy, resulting in higher accreditation factors. Conversely, under surplus conditions ($-5\%$), accreditation factors flatten and converge, reflecting the diminishing marginal value of capacity when reliability is already high.

Notably, the large nuclear unit consistently exhibits the low accreditation factor despite its $12\%$ physical outage rate, reflecting a ``lumpiness'' penalty: outages of large units create substantial capacity deficits precisely during high-stress periods. In contrast, the gas CT maintains stable and high accreditation due to its moderate size and low outage rate, while the PV plant remains constrained by its generation profile, with its contribution capped by the coincidence of solar output and net peak load.

\section{Conclusion}

This paper develops and applies a gradient-based framework to compare ELCC- and MRI-based capacity accreditation from both theoretical and computational perspectives. Under an EUE-based metric, we establish a local equivalence between marginal ELCC and MRI, while showing that their numerical implementations exhibit fundamentally different computational burdens. In particular, MRI computes accreditation directly from marginal sensitivities within a single Monte Carlo simulation, whereas ELCC relies on iterative root-finding. Among ELCC implementations, secant-based search substantially reduces runtime relative to bisection by exploiting gradient information. We further show that IPA provides a principled and highly efficient mechanism for computing gradient-based accreditation, enabling substantial computational savings when pathwise derivatives are available.

Large-scale numerical results compare the sensitivity of accreditation outcomes to perturbation step size and baseline system reliability level. Overall, MRI is a practical and scalable alternative to ELCC for large-scale accreditation studies, while still needing careful baseline system specification in gradient-based accreditation workflows. Future research will extend IPA-based accreditation to intertemporal resources such as energy storage and integrate the proposed computational framework into capacity market clearing processes.

\section*{Acknowledgment}

\emph{Disclaimer}: The views expressed in this paper are solely those of the authors and do not necessarily represent those of ISO New England, PJM Interconnection L.L.C. or its Board of Managers.

\bibliographystyle{IEEEtran}
\bibliography{ref}
\end{document}